\newtheorem{theorem}{Theorem}
\newtheorem{lemma}[theorem]{Lemma}
\newtheorem{proposition}[theorem]{Proposition}
\begin{document}

\title{On the constrained classical capacity of infinite-dimensional
covariant quantum channels}
\author{A. S. Holevo}
\affiliation{Steklov Mathematical Institute, 119991 Moscow, Russia.}

\begin{abstract}
The additivity of the minimal output entropy and that of the $\chi$-capacity
are known to be equivalent for finite-dimensional irreducibly covariant quantum
channels. In this paper we formulate a list of conditions allowing to
establish similar equivalence for infinite-dimensional covariant channels
with constrained input. This is then applied to Bosonic Gaussian channels
with quadratic input constraint to extend the classical capacity results of
the recent paper \cite{ghg} to the case where the complex structures
associated with the channel and with the constraint operator need not
commute. In particular, this implies a multimode generalization of the
''threshold condition'', obtained for single mode in \cite{schaefer}, and
the proof of the fact that under this condition the classical ''Gaussian
capacity'' resulting from optimization over Gaussian inputs is equal to the
full classical capacity.

We also investigate implications of the gauge-covariance condition for single- and multimode
Bosonic Gaussian channels.
\end{abstract}
\maketitle

\section{Introduction: finite dimensions}

In classical information theory, the capacity is a unique characteristic of a communication channel.
On the contrary, quantum channel is characterized by a whole variety of entropic quantities.
A powerful tool in investigation of the \textit{classical capacity} of a quantum channel is the \textit{minimal output entropy}.
For example, a theorem in the seminal paper \cite{Shb} allowed to show that failure of the global additivity of the latter implies
a similar phenomenon for the former.
The present paper is devoted to further investigation of the relation between these important quantities on the level of
individual channels.

For the background of this section we refer to \cite{HG}, \cite{h}. Let $
\Phi $ be a quantum channel in $d$-dimensional Hilbert space $\mathcal{H}$
i.e. a linear completely positive trace-preserving map of the algebra of all
linear operators in $\mathcal{H}$. A quantum analog of the Shannon capacity
is the $\chi$-\textit{capacity} of the channel $\Phi $:
\begin{eqnarray}\label{chicap}
&&C_{\chi }(\Phi )=\\
&&\max_{\pi}\left\{ S\left( \Phi \left[ \sum_{x}\pi (x)\rho
(x)\right] \right) -\sum_{x}\pi (x)S(\Phi \lbrack \rho (x)])\right\} \notag,
\end{eqnarray}
where $S(\rho )=-\mathrm{Tr}\rho \log \rho $ is the von Neumann entropy, and
the maximum is over \textit{state ensembles} i.e. finite probability
distributions $\pi $ ascribing probabilities $\pi (x)$ to density operators $%
\rho (x).$

The \textit{classical capacity} of the quantum channel $\Phi $, defined as
the maximal transmission rate per use of the channel, with coding and
decoding chosen for increasing number $n$ of independent uses of the channel
\begin{equation*}
\Phi ^{\otimes n}=\underset{n}{\underbrace{\Phi \otimes \dots \otimes \Phi }}
\end{equation*}%
such that the error probability goes to zero as $n\rightarrow \infty $, is
given by HSW theorem:
\begin{equation}\label{hsw}
C(\Phi )=\mathrm{lim}_{n\rightarrow \infty }(1/n)C_{\chi }(\Phi ^{\otimes
n}).
\end{equation}
In the case where the $\chi $-capacity is additive,
\begin{equation}
C_{\chi }(\Phi ^{\otimes n})=nC_{\chi }(\Phi ),  \label{addi}
\end{equation}%
one has $C(\Phi )=C_{\chi }(\Phi )$. Unlike the classical case, this
property does not hold in general, due to the possibility of \textit{
entangled} encodings at the input of the channel $\Phi ^{\otimes n}$ \cite{hast}.

An obvious upper estimate for $C_{\chi }(\Phi )$ is
\begin{equation}
C_{\chi }(\Phi )\leq \max_{\rho}S\left( \Phi \left[ \rho \right] \right) -%
\check{S}(\Phi ),  \label{ine1}
\end{equation}
where the minimal output entropy of the quantum channel $\Phi $ is defined
as
\begin{equation*}
\check{S}(\Phi )=\min_{\rho }S(\Phi (\rho )).
\end{equation*}

The first term in the right-hand side of (\ref{maxs}) is additive for all channels:
\begin{equation}
\max_{\rho ^{(n)}}S\left( \Phi ^{\otimes n}\left[ \rho ^{(n)}\right] \right)
=n\max_{\rho }S\left( \Phi \left[ \rho \right] \right).  \label{maxs}
\end{equation}
This is a simple corollary of (sub)additivity of the
von Neumann entropy with respect to tensor products (see also lemma \ref{L6}
below).

For some channels (\ref{ine1}) may become equality, allowing to reduce the
additivity (\ref{addi}) of $C_{\chi }$ to the additivity of the minimal output entropy
\begin{equation}
\check{S}\left( \Phi ^{\otimes n}\right) =n\check{S}\left( \Phi \right) .
\label{maddi}
\end{equation}
This is the case for irreducibly covariant channels. Channel $\Phi $ is
\textit{covariant} if there is a continuous (projective) unitary
representation $g\rightarrow V_{g}$ of a symmetry group $G$ in $\mathcal{H}$
such that
\begin{equation}
\Phi \left[ V_{g}\rho V_{g}^{\ast }\right] =U_{g}\Phi \left[ \rho \right]
U_{g}^{\ast },  \label{cova}
\end{equation}%
where $U_{g}$ are unitary operators, and \textit{irreducibly covariant} if
the representation $g\rightarrow V_{g}$ is irreducible. In this case,
assuming compactness of $G$, one has for arbitrary density operator $\rho$
\begin{equation*}
I/d = \int_{G}V_{g}{\rho }V_{g}^{\ast }\,\pi ^{0}(dg) ,
\end{equation*}
where $I$ is the unit operator in $\mathcal{H}$, $\pi ^{0}(dg)$ is the
invariant probability measure on $G$ (this follows from the orthogonality
relations for irreducible representation). Then one can show, see e.g. \cite
{hcov}, that
\begin{equation}
\max_{\rho }S\left( \Phi \left[ \rho \right] \right) =S(\Phi \left[ I/d%
\right] ),  \label{unit}
\end{equation}%
and
\begin{equation}
C_{\chi }(\Phi )=\max_{\rho }S\left( \Phi \left[ \rho \right] \right) -%
\check{S}(\Phi )=S\left( \Phi \lbrack I/d]\right) -\check{S}\left( \Phi
\right) ,  \label{equ}
\end{equation}
making (\ref{ine1}) the equality. The optimal ensemble for $C_{\chi }(\Phi )$
is $\left\{ \pi ^{0}(dg),\,V_{g}\rho _{0}V_{g}^{\ast }\right\} $ where $\rho
_{0}$ is a minimizer for $S(\Phi (\rho ))$. If the group $G$ is not finite,
then this is a generalized ensemble in the sense of the next Section.

Assume moreover that additivity (\ref{maddi}) of the minimal output entropy
holds, then
\begin{eqnarray*}
&&n\left[ \max_{\rho }S\left( \Phi \left[ \rho \right] \right) -\check{S}(\Phi
)\right] = nC_{\chi }(\Phi )\leq C_{\chi }(\Phi ^{\otimes n}) \\
&&\leq \max_{\rho ^{(n)}}S\left( \Phi ^{\otimes n}\left[ \rho ^{(n)}\right]
\right) -\check{S}\left( \Phi ^{\otimes n}\right) \\
&&=n\left[ \max_{\rho }S\left( \Phi \left[ \rho \right] \right) -\check{S}%
(\Phi )\right] ,
\end{eqnarray*}%
where the first equality follows from (\ref{equ}), the first inequality --
from the definition of $C_{\chi },$ the second inequality -- from (\ref{ine1}%
) applied to $\Phi ^{\otimes n}$, the second equality -- from the equality (\ref{maxs})
(for irreducibly covariant channels it is just a consequence of (\ref
{unit}), and from the assumption (\ref{maddi}). Thus $C_{\chi
}(\Phi ^{\otimes n})=nC_{\chi }(\Phi )=n\left[ S\left( \Phi \left[ I/d\right]
\right) -\check{S}(\Phi )\right] $ and
\begin{equation*}
C(\Phi )=C_{\chi }(\Phi )=S\left( \Phi \lbrack I/d]\right) -\check{S}\left(
\Phi \right).
\end{equation*}

\section{Infinite-dimensional case}

Let $\mathcal{H}$ be a separable complex Hilbert space, $\mathfrak{L}(
\mathcal{H} ) $ the algebra of all bounded operators in $\mathcal{H}$, $
\mathfrak{T }(\mathcal{H})$ the space of trace-class operators, and $
\mathfrak{S}(\mathcal{H})$ the convex set of density operators in $
\mathcal{H}$. \textit{Quantum channel} is a linear completely positive
trace-preserving map $\Phi$ in $\mathfrak{T}(\mathcal{H}).$

\textit{Generalized ensemble} is a pair $\{\pi(dx), \rho (x)\}$ where $\pi$
is a probability measure on a standard Borel space $\mathcal{X}$ and $%
x\rightarrow \rho (x)$ is a measurable map from $\mathcal{X}$ to $\mathfrak{S%
}(\mathcal{H})$. The \textit{average state} of the generalized ensemble $\pi
$ is defined as the barycenter of the probability measure
\begin{equation*}
\bar{\rho}_{\pi }=\int\limits_{\mathcal{X}}\rho (x)\,\pi (d x ).
\end{equation*}%
The conventional ensembles correspond to finitely supported measures.

In the infinite-dimensional case one usually has to consider the input
constraints to avoid infinite values of the capacities. Let $H$ be a
positive selfadjoint operator in $\mathcal{H}$, which usually represents
energy of the input. We consider the input states with constrained energy: $%
\mathrm{Tr}\rho H\leq E,$ where $E$ is a fixed positive constant. Since the
operator $H$ can be unbounded, care should be taken in defining the trace;
we put $\mathrm{Tr}\rho H=\int_{0}^{\infty }\lambda \,dm_{\rho }(\lambda ),$
where $m_{\rho }(\lambda )=\mathrm{Tr}\rho E(\lambda ),$ and $E(\lambda )$
is the spectral function of the selfadjoint operator $H.$ Then the \textit{%
constrained $\chi -$ capacity} is given by the following generalization of
the expression (\ref{chicap}):
\begin{equation}
C_{\chi }(\Phi ,H,E)=\sup_{\pi :\mathrm{Tr}\bar{\rho}_{\pi }H\leq E}\chi
(\pi ),  \label{chi}
\end{equation}%
where%
\begin{equation}
\chi (\pi )=S(\Phi [ \bar{\rho}_{\pi }])-\int\limits_{\mathcal{X}} S(\Phi [
\rho (x)])\pi (dx ) .  \label{chipi}
\end{equation}%
To ensure that this expression is defined correctly, certain additional
conditions should  be imposed upon the channel $\Phi $ and the constraint
operator $H$ (see \cite{HOSHI}, \cite{h}, Sec. 11.5), which are always
fulfilled in the Gaussian case we consider below.

Denote $H^{(n)}=H\otimes I\dots \otimes I+\dots +I\otimes \dots \otimes
I\otimes H,$ then the \textit{constrained classical capacity} is given by
the expression
\begin{equation}
C(\Phi ,H,E)=\lim_{n\rightarrow \infty }\frac{1}{n}C_{\chi }(\Phi ^{\otimes
n},H^{(n)},nE)  \label{ccc}
\end{equation}
generalizing (\ref{hsw}) \cite{h}.

Consider the following constrained set of states
\begin{equation*}
\mathcal{E}_{E}=\left\{ \rho :\mathrm{Tr}\rho H\leq E\right\} ,
\end{equation*}%
We have an obvious estimate
\begin{equation}
C_{\chi }(\Phi ,H,E)\leq \sup_{\rho \in \mathcal{E}_{E}}S\left( \Phi \left[
\rho \right] \right) -\inf_{\rho }S(\Phi \left[ \rho \right] ).  \label{ine}
\end{equation}

\begin{proposition}
\label{prop1} Consider the following assumptions:

\begin{enumerate}
\item $\sup_{\rho \in \mathcal{E}_E}S\left( \Phi \left[ \rho \right] \right)
$ is attained on a state ${\rho}_{E}^0;$

\item $\inf_{\rho }S(\Phi (\rho ))$ is attained on a state ${\rho}_0 $;

\item $\Phi $ is a covariant channel in the sense (\ref{cova}), and there
exists a Borel probability measure ${\pi }_{E}^{0}$ on $G$ such that
\begin{equation*}
{\rho }_{E}^{0}=\int_{G}V_{g}{\rho }_{0}V_{g}^{\ast }\,{\pi }_{E}^{0}(dg).
\end{equation*}

\item the minimal output entropy of the channel $\Phi $ is additive in the
sense (\ref{maddi}),
\end{enumerate}

Then under the conditions 1-3
\begin{eqnarray}
&&C_{\chi }(\Phi ,H,E) \nonumber \\
&&=\sup_{\rho \in \mathcal{E}_{E}}S\left( \Phi \left[ \rho
\right] \right) -\inf_{\rho }S(\Phi \left[ \rho \right] )\nonumber \\
&&=S\left( \Phi \left[
{\rho }_{E}^{0}\right] \right) -S(\Phi \left[ {\rho }_{0}\right] ),
\label{C1}
\end{eqnarray}
and the optimal ensemble for $C_{\chi }$ consists of the states $V_{g}{\rho }%
_{0}V_{g}^{\ast }$ with the probability distribution ${\pi }_{E}^{0}(dg).$

If, in addition, the condition 4 holds, then
\begin{equation*}
C_{\chi }(\Phi ^{\otimes n},H^{(n)},nE)=nC_{\chi }(\Phi ,H,E)
\end{equation*}%
and
\begin{equation*}
C(\Phi ,H,E)=C_{\chi }(\Phi ,H,E)=S\left( \Phi \left[ {\rho }_{E}^{0}\right]
\right) -S(\Phi \left[ {\rho }_{0}\right] ).
\end{equation*}
\end{proposition}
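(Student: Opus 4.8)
The plan is to mirror the finite-dimensional argument of the Introduction, replacing the maximally mixed state $I/d$ by the constrained maximizer $\rho_E^0$ and carefully tracking the energy constraint. For conditions 1--3 I would establish (\ref{C1}) by matching upper and lower bounds. The upper bound is already recorded in (\ref{ine}); by conditions 1 and 2 the supremum and infimum there are attained at $\rho_E^0$ and $\rho_0$, so the right-hand side equals $S(\Phi[\rho_E^0]) - S(\Phi[\rho_0])$. For the lower bound I would exhibit the generalized ensemble $\{\pi_E^0(dg),\, V_g\rho_0 V_g^*\}$ furnished by condition 3 and compute its $\chi$-quantity (\ref{chipi}). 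Its barycenter is $\int_G V_g\rho_0 V_g^*\,\pi_E^0(dg) = \rho_E^0$ by condition 3, which lies in $\mathcal{E}_E$, so the ensemble is admissible and the first term of (\ref{chipi}) equals $S(\Phi[\rho_E^0])$. For the second term, covariance (\ref{cova}) gives $\Phi[V_g\rho_0 V_g^*] = U_g\,\Phi[\rho_0]\,U_g^*$, whose entropy is $S(\Phi[\rho_0])$ for every $g$ by unitary invariance of the von Neumann entropy; since the integrand is constant, the integral against $\pi_E^0$ collapses to $S(\Phi[\rho_0])$. Hence $\chi(\pi_E^0) = S(\Phi[\rho_E^0]) - S(\Phi[\rho_0])$, matching the upper bound and proving both (\ref{C1}) and the optimality of the stated ensemble.

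For the second part I would add condition 4 and prove a two-sided bound on $C_\chi(\Phi^{\otimes n}, H^{(n)}, nE)$. The inequality $\geq n\,C_\chi(\Phi, H, E)$ follows by feeding in the $n$-fold product ensemble built from $\{\pi_E^0, V_g\rho_0 V_g^*\}$: its barycenter is $(\rho_E^0)^{\otimes n}$, which satisfies $\mathrm{Tr}\,(\rho_E^0)^{\otimes n} H^{(n)} = n\,\mathrm{Tr}\,\rho_E^0 H \leq nE$, so it is admissible, and additivity of $\chi$ on product ensembles reduces its value to $n$ times the single-copy quantity $\chi(\pi_E^0) = S(\Phi[\rho_E^0]) - S(\Phi[\rho_0])$ computed above. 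For the reverse inequality I would apply (\ref{ine}) to $\Phi^{\otimes n}$ with constraint $H^{(n)}$ and budget $nE$, and invoke condition 4 in the form $\check S(\Phi^{\otimes n}) = n\,\check S(\Phi) = n\,S(\Phi[\rho_0])$ (using condition 2); it then remains only to control the constrained maximum output entropy.

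The main obstacle is precisely this constrained additivity: one must show $\sup_{\mathrm{Tr}\,\rho^{(n)} H^{(n)} \leq nE} S(\Phi^{\otimes n}[\rho^{(n)}]) = n\,S(\Phi[\rho_E^0])$. Unlike the unconstrained identity (\ref{maxs}), here the energy budget must be distributed. Given any admissible $\rho^{(n)}$ with marginals $\rho_i$, subadditivity of the von Neumann entropy gives $S(\Phi^{\otimes n}[\rho^{(n)}]) \leq \sum_i S(\Phi[\rho_i]) \leq \sum_i f(E_i)$, where $E_i = \mathrm{Tr}\,\rho_i H$ and $f(E) = \sup_{\rho \in \mathcal{E}_E} S(\Phi[\rho])$; the definition of $H^{(n)}$ forces $\sum_i E_i = \mathrm{Tr}\,\rho^{(n)} H^{(n)} \leq nE$. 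Concavity of $f$ (from concavity of $S\circ\Phi$ together with convexity of the constraint) and its monotonicity then yield $\sum_i f(E_i) \leq n\,f\bigl(\tfrac1n \sum_i E_i\bigr) \leq n\,f(E) = n\,S(\Phi[\rho_E^0])$, as required; this is where the bookkeeping is most delicate, and one may alternatively invoke lemma \ref{L6}.

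Combining the two bounds gives $C_\chi(\Phi^{\otimes n}, H^{(n)}, nE) = n\,C_\chi(\Phi, H, E)$ for every $n$. Substituting this into the defining limit (\ref{ccc}) collapses it to $C(\Phi, H, E) = C_\chi(\Phi, H, E)$, which by the first part equals $S(\Phi[\rho_E^0]) - S(\Phi[\rho_0])$, completing the proof.
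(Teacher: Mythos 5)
Your proposal is correct and follows essentially the same route as the paper: the lower bound via the covariant ensemble $\{\pi_E^0(dg), V_g\rho_0 V_g^*\}$ matched against the upper bound (\ref{ine}), then superadditivity plus condition 4 plus the constrained additivity of the maximal output entropy for the $n$-copy statement. Your proof of that last additivity via the concave, monotone value function $f(E)=\sup_{\rho\in\mathcal{E}_E}S(\Phi[\rho])$ is just a repackaging of the paper's Lemma \ref{L6} (which averages the marginals directly and applies subadditivity and concavity of $S\circ\Phi$), as you yourself note.
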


\begin{proof} To prove the first statement it is sufficient to substitute
the ensemble $\left\{ {\pi }_{E}^{0}(dg),V_{g}\rho _{0}V_{g}^{\ast }\right\}
$ into the expression (\ref{chipi}). For covariant channels the integral
term is equal to $S(\Phi \lbrack \rho _{0}])$, thus we obtain that the
right-hand side of (\ref{ine}) is also a lower estimate for $C_{\chi }(\Phi
,H,E)$.

To prove the second statement we use lemma 11.20 of \cite{h}

\begin{lemma}
\label{L6}
\begin{equation*}
\sup_{\rho ^{(n)}:\mathop{\rm Tr}\nolimits\rho ^{(n)}H^{(n)}\leq nE}S(\Phi
^{\otimes n}[\rho ^{(n)}])=n\sup_{\rho :\mathop{\rm Tr}\nolimits\rho H\leq
E}S(\Phi \lbrack \rho ]).
\end{equation*}
\end{lemma}

\begin{proof}
We give the proof here for completeness. We first show that
\begin{equation}
\sup_{\rho ^{(n)}:\mathop{\rm Tr}\nolimits\rho ^{(n)}H^{(n)}\leq nE}S(\Phi
^{\otimes n}[\rho ^{(n)}])\leq n\sup_{\rho :\mathop{\rm Tr}\nolimits\rho
H\leq E}S(\Phi \lbrack \rho ]).  \label{hphine}
\end{equation}

Indeed, denoting by $\rho _{j}$ the partial state of $\rho ^{(n)}$ in the $%
j- $th tensor factor of ${\mathcal{H}}_{A}^{\otimes n}$ and letting $\bar{%
\rho}=\frac{1}{n}\sum_{j=1}^{n}\rho _{j},$ we have
\begin{equation*}
S(\Phi ^{\otimes n}[\rho ^{(n)}])\leq \sum_{j=1}^{n}S(\Phi \lbrack \rho
_{j}])\leq nS(\Phi \lbrack \bar{\rho}]),
\end{equation*}%
where in the first inequality we used subadditivity of the quantum entropy,
while in the second -- its concavity. Moreover,
$$\mathop{\rm Tr}\nolimits%
\bar{\rho}H=\frac{1}{n}\mathop{\rm Tr}\nolimits\rho ^{(n)}H^{(n)}\leq E,$$
hence (\ref{hphine}) follows. In the opposite direction, take $\rho
^{(n)}=\rho _{E}^{0\otimes n}$ and use the additivity of the entropy for product
states.
\end{proof}

Then we have similarly to the finite-dimensional case
\begin{widetext}
\begin{eqnarray*}
n\left[ S\left( \Phi \left[ \rho _{E}^{0}\right] \right) -S(\Phi \left[
\rho _{0}\right] )\right] &=& nC_{\chi }(\Phi ,H,E)\leq C_{\chi }(\Phi
^{\otimes n},H^{(n)},nE) \leq \max_{\rho ^{(n)}:\mathrm{Tr}\rho ^{(n)}H^{(n)}\leq
nE}S\left( \Phi ^{\otimes n}\left[ \rho ^{(n)}\right] \right) -\min_{\rho
^{(n)}}S\left( \Phi ^{\otimes n}\left[ \rho ^{(n)}\right] \right) \\
&=&n\left[ \max_{\rho :\mathrm{Tr}\rho H\leq E}S\left( \Phi \left[ \rho
\right] \right) -\min_{\rho }S\left( \Phi \left[ \rho \right] \right) \right]
=n\left[ S\left( \Phi \left[ \rho _{E}^{0}\right] \right) -S(\Phi \left[
\rho _{0}\right] )\right] ,
\end{eqnarray*}\end{widetext}
where the first equality follows from (\ref{C1}), the first inequality from
the definition of $C_{\chi },$ the second inequality from (\ref{ine})
applied to $\Phi ^{\otimes n}$, the second equality from lemma \ref{L6} and
the condition 4. Summarizing,
\begin{eqnarray*}
&&C_{\chi }(\Phi ^{\otimes n},H^{(n)},nE)\\
&&=nC_{\chi }(\Phi ,H,E)\\
&&=n\left[
S\left( \Phi \left[ \rho _{E}^{0}\right] \right) -S(\Phi \left[ \rho _{0}
\right] )\right] ,
\end{eqnarray*}
hence the second statement follows.\end{proof}

\section{The case of Bosonic Gaussian channels}

In the papers \cite{ghg}, \cite{ANDREA}, \cite{ggch} a solution of the
long-standing quantum Gaussian optimizers conjecture was given for gauge-covariant or
contravariant Bosonic Gaussian channels. In particular, the constrained
classical capacity was computed under the assumption that the constraint
operator is gauge-invariant with respect to the same complex structure as
the channel. Basing on observations of previous section and using the fact
that a general Bosonic Gaussian channel is irreducibly covariant under the
group of displacements (the Weyl group), we can relax this assumption.

In this section we systematically use notations and some results from
the book \cite{h} where further references are given (see also Appendix).
Let $\mathcal{H}$ be the space of an irreducible representation $
z\rightarrow W(z);\,z\in Z, $ of the Canonical Commutation Relations,
where $Z$ is a finite-dimensional symplectic space $(\mathbb{R}^{2s},\Delta )
$ with
\begin{equation}
\Delta =\left[
\begin{array}{ccccc}
0 & 1 &  &  &  \\
-1 & 0 &  &  &  \\
&  & \ddots &  &  \\
&  &  & 0 & 1 \\
&  &  & -1 & 0%
\end{array}%
\right] \equiv \mathrm{diag}\left[
\begin{array}{cc}
0 & 1 \\
-1 & 0%
\end{array}%
\right] .  \label{delta}
\end{equation}%
Here $W(z)=\exp i\,Rz$ are the unitary Weyl operators, where
\begin{equation*}
Rz=\sum_{j=1}^{s}(x_{j}q_{j}+y_{j}p_{j}),
\end{equation*}%
and
\begin{equation*}
R=[q_{1}\;p_{1}\;\dots \;q_{s}\;p_{s}]
\end{equation*}%
is the row vector of the canonical observables of the quantized system,
while $z=[x_{1}\;y_{1}\;\dots \;x_{s}\;y_{s}]^{t}$ is the column vector of
the real parameters.

A centered Gaussian state $\rho $ on $\mathfrak{L}(\mathcal{H})$ is
determined by its covariance matrix $\alpha =\mathrm{Re\,}\mathrm{Tr\,}%
R^{t}SR$ which is a real symmetric $s\times s$-matrix satisfying
\begin{equation*}
\alpha \geq \pm \frac{i}{2}\Delta
\end{equation*}%
The entropy of $\rho $ is equal to
\begin{equation}
S(\rho )=\frac{1}{2}\mathrm{Sp}\ g\left( \mathrm{abs}(\Delta ^{-1}\alpha )-%
\frac{I}{2}\right) ,  \label{abs}
\end{equation}%
where $\mathrm{Sp}$ is used to denote trace of a matrix as distinct from the
trace $\mathrm{Tr}$ of operators in the underlying Hilbert space \cite{hir}.
The operator $A=\Delta ^{-1}\alpha $ has the eigenvalues $\pm i\alpha _{j}$
(where $\alpha _{j}$ are real), hence its matrix is diagonalizable (in the
complex domain). For any diagonalizable matrix $M=U\mathrm{\ diag}%
(m_{j})U^{-1}$, we denote $\mathrm{abs}(M)=U\mathrm{diag}(|m_{j}|)U^{-1}$.

Let $H=\sum_{j,k=1}^{s}\epsilon _{jk}R_{j}R_{k},$ where $\epsilon =\left[
\epsilon _{jk}\right] $ is a symmetric positive definite matrix, be a
quadratic energy operator. Notice that it always has an associated complex
structure $J_H$ satisfying $[\epsilon\Delta ,J_H]=0$. This is the orthogonal
operator from the polar decomposition of the operator $-\epsilon\Delta$, see
sec. 12.2.3 \cite{h} for detail.

A centered Gaussian channel $\Phi $ is defined by the relation
\begin{equation}
\Phi ^{\ast }[W(z)]=W(Kz)\exp \left(-\frac{1}{2}z^{t}\mu z\right) .
\label{bosgaus}
\end{equation}%
where $(K,\,\mu )$ are the matrix parameters satisfying
\begin{equation*}
\mu \geq \pm \frac{i}{2}\left( \Delta -K^{t}\Delta K\right) .
\end{equation*}%
The action of the channel on the centered Gaussian state with a covariance
matrix $\alpha $ is described by the equation%
\begin{equation*}
\alpha \rightarrow K^{t}\alpha K+\mu .
\end{equation*}%
A Bosonic Gaussian channel is irreducibly covariant with respect to the
representation $z\rightarrow W(z) $ in the sense
\begin{equation*}
\Phi \lbrack W(z)\rho W(z)^{\ast }]=W(K^{s}z)\Phi \lbrack \rho
]W(K^{s}z)^{\ast },\quad z\in Z,
\end{equation*}%
where $K^{s}=\Delta ^{-1}K^{t}\Delta $, see e.g. sec. 12.4.2 in \cite{h}.

The conditions 1 of proposition \ref{prop1} follows from the argument of
sec. 12.5 \cite{h}, moreover $\rho _{E}^{0}$ is a centered Gaussian state
with a covariance matrix
\begin{equation*}
\alpha _{E}^{0}=\arg \max_{\alpha :\,\mathrm{Sp}\alpha \epsilon \leq E}%
\mathrm{Sp}\ g\left( \mathrm{abs}(\Delta ^{-1}\left[ K^{t}\alpha K+\mu %
\right] )-\frac{I}{2}\right)
\end{equation*}

Assuming that the channel $\Phi $ is gauge-covariant or contravariant with respect to a
complex structure $J$ in $Z,$ the conditions 2 and 4 follow from the results
of the paper \cite{ghg} concerning the minimal output entropy. Moreover, $%
\rho _{0}$ can be taken as the vacuum state related to the complex structure
$J$. It is shown in sec. 12.3.2 of \cite{h} that the vacuum state related to
the complex structure $J$ is the pure centered Gaussian state with the
covariance matrix $\frac{1}{2}\Delta J.$

The condition 3 is fulfilled provided
\begin{equation}
{\alpha } _{E}^{0}\geq \frac{1}{2}\Delta J.  \label{thr}
\end{equation}%
In this case%
\begin{equation*}
\rho _{E}^{0}=\int_{Z}W(z)\rho _{0}W(z)^{\ast }\,{\pi }_{E}^{0}(dz),
\end{equation*}%
where ${\pi }_{E}^{0}(dz)$ is the centered Gaussian distribution on $Z$ with the
covariance matrix ${\alpha } _{E}^{0}-\frac{1}{2}\Delta J.$ One can check this by
comparing the quantum characteristic functions of both sides. The optimizing
ensemble consists thus of the $J-$coherent states $W(z)\rho _{0}W(z)^{\ast }$
with the probability distribution ${\pi }_{E}^{0}(dz).$

The constrained classical capacity of the channel $\Phi $ is equal to
\begin{eqnarray}
&&C(\Phi ;H,E)=C_{\chi }(\Phi ;H,E)  \label{maxnu} \\
&&\frac{1}{2}\max_{\alpha :\,\mathrm{Sp}\alpha
\epsilon \leq E}\mathrm{Sp}\ g\left( \mathrm{abs}(\Delta ^{-1}\left[
K^{t}\alpha K+\mu \right] )-\frac{I}{2}\right)  \notag \\
&&-\frac{1}{2}\mathrm{Sp}\ g\left( \mathrm{abs}(\Delta ^{-1}\left[ \frac{1}{2}%
K^{t}\Delta JK+\mu \right] )-\frac{I}{2}\right) .  \notag
\end{eqnarray}

Gauge-covariance of the channel $\Phi$ with respect to a complex structure $%
J $ is equivalent to the conditions
\begin{equation}
\lbrack K,J]=0,\quad \lbrack \Delta ^{-1}\,\mu ,J]=0.  \label{cov}
\end{equation}%
Given a symmetric $\mu > 0,$ one can always find a complex structure $J,$
satisfying $[\Delta ^{-1}\,\mu ,J]=0;$ it is just the orthogonal operator
from the polar decomposition of the operator $\Delta ^{-1}\,\mu $ in the
Euclidean space $(Z,\mu )$, cf. (\ref{kspd}). Then the first equation becomes a restriction
for admissible $K.$ For gauge-contravariant channels it is replaced by $%
\{K,J\}=0$.

In the paper \cite{schaefer} the \textit{Gaussian capacities} obtained by
optimization over Gaussian inputs where computed for a generic
non-degenerate single-mode channel when the input signal energy is above
certain threshold. Our observations imply in particular that these Gaussian
capacities are in fact equal to the full classical capacities, and the
inequality (\ref{thr}) appears as the multimode generalization of the
threshold condition in \cite{schaefer}.

Let us confirm this by calculation of the example of squeezed noise channel.
The channel is described by the parameters
\begin{equation*}
K=k\left[
\begin{array}{cc}
1 & 0 \\
0 & 1%
\end{array}%
\right] ,\quad \mu =\left[
\begin{array}{cc}
\mu _{1} & 0 \\
0 & \mu _{2}%
\end{array}%
\right] ;\quad \mu _{1}\mu _{2}\geq \frac{1}{4}\left\vert k^{2}-1\right\vert
^{2}.
\end{equation*}%
This describes attenuation $(0<k<1)$, amplification $(1<k)$ and additive
noise $(k=1)$ channels, with the background squeezed noise. Take the energy
operator $H=q^{2}+p^{2}$ with $\epsilon=\left[
\begin{array}{cc}
1 & 0 \\
0 & 1
\end{array}\right] $ and the corresponding complex structure $J_{H}$ $=%
\left[
\begin{array}{cc}
0 & -1 \\
1 & 0%
\end{array}%
\right] .$

The complex structure of the channel satisfying (\ref{cov}) is given by
\begin{equation*}
J=\left[
\begin{array}{cc}
0 & -\sqrt{\mu _{2}/\mu _{1}} \\
\sqrt{\mu _{1}/\mu _{2}} & 0%
\end{array}%
\right] ,
\end{equation*}%
which does not commute with $J_{H}$ unless $\mu _{1}=\mu _{2}.$ The
covariance matrix of the squeezed vacuum is
\begin{equation*}
\frac{1}{2}\Delta J=\frac{1}{2}\left[
\begin{array}{cc}
\sqrt{\mu _{1}/\mu _{2}} & 0 \\
0 & \sqrt{\mu _{2}/\mu _{1}}%
\end{array}%
\right] .
\end{equation*}%
The eigenvalues of the matrix
\begin{eqnarray*}
&&\Delta ^{-1}\left[ \frac{1}{2}K^{t}\Delta JK+\mu \right]\\
&& =\left[
\begin{array}{cc}
0 & \mu _{2}+\frac{k^{2}}{2}\sqrt{\mu _{2}/\mu _{1}} \\
-\left( \mu _{1}+\frac{k^{2}}{2}\sqrt{\mu _{1}/\mu _{2}}\right) & 0
\end{array}
\right]
\end{eqnarray*}%
are equal to $\pm i\left( \sqrt{\mu _{1}\mu _{2}}+k^{2}/2\right) ,$ hence
the second term in (\ref{maxnu}) is $g\left( \sqrt{\mu _{1}\mu _{2}}+\left(
k^{2}-1\right) /2\right) .$

To compute the first term, we can restrict to diagonal covariance matrices
\begin{equation*}
\alpha =\left[
\begin{array}{cc}
\alpha _{1} & 0 \\
0 & \alpha _{2}%
\end{array}%
\right] ,\quad \alpha _{1}+\alpha _{2}\leq E,\quad \alpha _{1}\alpha
_{2}\geq \frac{1}{4}.
\end{equation*}%
The matrix%
\begin{equation*}
\Delta ^{-1}\left[ K^{t}\alpha K+\mu \right] =\left[
\begin{array}{cc}
0 & \mu _{2}+k^{2}\alpha _{2} \\
-\left( \mu _{1}+k^{2}\alpha _{1}\right) & 0%
\end{array}%
\right]
\end{equation*}%
has the eigenvalues $\pm i\sqrt{\left( \mu _{1}+k^{2}\alpha _{1}\right)
\left( \mu _{2}+k^{2}\alpha _{2}\right) },$ so that the maximized expression
is $$g\left( \sqrt{\left( \mu _{1}+k^{2}\alpha _{1}\right) \left( \mu
_{2}+k^{2}\alpha _{2}\right) }-1/2\right) .$$ Since $g(x)$ is increasing, we
have to maximize $\left( \mu _{1}+k^{2}\alpha _{1}\right) \left( \mu
_{2}+k^{2}\alpha _{2}\right) $ under the constraints $\alpha _{1}+\alpha
_{2}\leq E,\quad \alpha _{1}\alpha _{2}\geq \frac{1}{4}.$ The first
constraint gives the values
\begin{equation*}
\alpha ^{0}_{1}=E/2+\left( \mu _{2}-\mu _{1}\right) /2k^{2},\quad \alpha
^{0}_{2}=E/2-\left( \mu _{2}-\mu _{1}\right) /2k^{2}
\end{equation*}%
corresponding to the maximal value of the first term
\begin{equation*}
g\left( \frac{1}{2}\left( k^{2}E+\left( \mu _{1}+\mu _{2}\right) -1\right)
\right) .
\end{equation*}%
The second constraint will be automatically fulfilled provided we impose the
condition (\ref{thr}) which amounts to $\alpha _{1}^{0}\geq \frac{1}{2}\sqrt{%
\mu _{1}/\mu _{2}},\,\alpha _{2}^{0}\geq \frac{1}{2}\sqrt{\mu _{2}/\mu _{1}}%
, $ or, introducing the squeezing parameter $\eta =\sqrt{\mu _{2}/\mu
_{1}} $%
\begin{equation*}
E\geq \frac{1}{2}\left[ \eta +\eta ^{-1}+\left\vert \eta -\eta
^{-1}\right\vert \left( 1+\frac{2}{k^{2}}\sqrt{\mu _{1}\mu _{2}}\right) %
\right] ,
\end{equation*}
where the term $\frac{1}{2}(\eta +\eta ^{-1})$ corresponds to the energy of the squeezed vacuum
$\mathrm{Sp}\left(\frac{1}{2}\Delta J\right)\epsilon$.
Under this condition
\begin{widetext}
\begin{equation*}
C(\Phi ;H,E)=C_{\chi }(\Phi ;H,E)=g\left( \frac{1}{2}\left( k^{2}E+\left(
\mu _{1}+\mu _{2}\right) -1\right) \right) -g\left( \sqrt{\mu _{1}\mu _{2}}%
+\left( k^{2}-1\right) /2\right) .
\end{equation*}
\end{widetext}
These values up to notations coincide with those computed in \cite{schaefer}
, Cor. 2.

\section{Implications of gauge-covariance for nondegenerate Gaussian channels}

The solution of the Gaussian optimizers conjecture given in \cite
{ghg} applies to a more general situation where the complex structures $J_{A}, J_{B}$
at the input and the output of the channel need not coincide. In that case the gauge
covariance condition (\ref{cov}) take the form (see \cite{h}, Eq. (12.147))
\begin{equation}
KJ_{B}-J_{A}K=0,\quad \lbrack \Delta ^{-1}\mu ,J_{B}]=0.  \label{kj}
\end{equation}%
Let us investigate how restrictive is this condition, assuming that both $%
\mu $ and $K$ are nondegenerate matrices. As already mentioned, there exist $%
J_{B}$ satisfying the second equation, therefore we ask for existence of the
complex structure $J_{A}$ satisfying the first equation. Since $K$ is
nondegenerate, this is equivalent to the question: when does the operator  $%
J=KJ_{B}K^{-1}$ define a complex structure? Notice that always $J^{2}=-I,$
therefore a necessary and sufficient condition for that is (\ref{comstr}),
which in our case transforms into
\begin{equation*}
\Delta KJ_{B}K^{-1}=-\left( K^{t}\right) ^{-1}J_{B}^{t}K^{t}\Delta \geq 0.
\end{equation*}%
Taking into account nondegeneracy of $K,$ this is equivalent to%
\begin{equation*}
K^{t}\Delta KJ_{B}=-J_{B}^{t}K^{t}\Delta K\geq 0
\end{equation*}%
or%
\begin{equation}
\left( K^{t}\Delta K\Delta ^{-1}\right) \Delta J_{B}=-J_{B}^{t}\Delta \left(
K^{t}\Delta K\Delta ^{-1}\right) ^{t}\geq 0.  \label{commut}
\end{equation}%
Notice that $\Delta J_{B}=-J_{B}^{t}\Delta \geq 0$ since $J_{B}$ is  a
complex structure satisfying (\ref{comstr}).

Let us compute the operator $K^{t}\Delta K\Delta ^{-1}.$ It will be
convenient to rearrange the coordinates so that $z=[x_{1}\;\dots
\,x_{s}\,y_{1}\;\dots \;y_{s}]^{t},$ then
\begin{equation*}
\Delta =\left[
\begin{array}{cc}
0 & I \\
-I & 0%
\end{array}%
\right] ,\quad K=\left[
\begin{array}{cc}
A & B \\
C & D%
\end{array}%
\right] ,
\end{equation*}%
and%
\begin{equation}
K^{t}\Delta K\Delta ^{-1}=\left[
\begin{array}{cc}
A^{t}D-C^{t}B & C^{t}A-A^{t}C \\
B^{t}D-D^{t}B & D^{t}A-B^{t}C%
\end{array}%
\right] .  \label{op}
\end{equation}%
In the case of single mode, $A,B,C,D$ are real numbers, hence
\begin{equation*}
K^{t}\Delta K\Delta ^{-1}=\det K\left[
\begin{array}{cc}
1 & 0 \\
0 & 1%
\end{array}%
\right]
\end{equation*}%
and the condition (\ref{commut}) is fulfilled provided $\det K>0$.
We thus obtain a result which is implicit in Sec. 12.6.1 of \cite{h}:
all the nondegenerate single-mode Gaussian channels with
positive $\det K$ are gauge-covariant. In the case $\det K>0$ we obtain
gauge-contravariant channels with $KJ_{B}+J_{A}K=0$ instead of the first
equation in (\ref{kj}).

In the case of many modes, $s\geq 2$, by asking the operator (\ref{op}) to be a multiple
of identity, we obtain a sufficient condition for (\ref{commut}), and hence
for the channel to be gauge-covariant. On the other case, for many modes it
is easy to give an example of a nondegenerate Gaussian channel which is not
gauge-covariant  for any choice of complex structures $J_{A},J_{B}.$ Consider
the canonical complex structure $J_{B}=\Delta ^{-1}$. Then the condition (\ref{commut}) is equivalent to the requirement that the matrix (\ref{op})
is symmetric and nonnegative definite. Take a positive diagonal matrix $\mu $ satisfying
the second condition in (\ref{kj}) and $K=\left[
\begin{array}{cc}
I & 0 \\
0 & D%
\end{array}
\right] .$ The equality in  (\ref{commut}) is not fulfilled  unless $D^{t}=D.
$ In particular, this is the case for ``not-so-normal mode
decomposition'' of $K$ \cite{wolf} with non-diagonal Jordan
blocks.

\section{Appendix. Complex structures in a symplectic space}

Denote $\Delta (z,z^{\prime })=z^{t}\Delta z^{\prime }$ the symplectic form
in $Z.$ A basis $\left\{ e_{j},h_{j};j=1,\dots ,s\right\} $ in $Z$ is called
\textit{symplectic} if
\begin{equation}
\Delta (e_{j},h_{k})=\delta _{jk},\quad\Delta (e_{j},e_{k})=\Delta
(h_{j},h_{k})=0  \label{syba}
\end{equation}
for all $j,k=1,\dots ,s$. The transition matrix $T$ from the initial symplectic basis in $Z$ to the
new symplectic basis is a matrix of \textit{symplectic transformation} in $%
(Z,\Delta ),$ which is characterized by the property
\begin{equation*}
\Delta (Tz,Tz^{\prime })=\Delta (z,z^{\prime });\quad z,z^{\prime }\in Z.
\end{equation*}

Operator $J$ in $(Z,\Delta )$ is called \textit{operator of complex structure%
} if
\begin{equation}
J^{2}=-I,  \label{j2e}
\end{equation}%
where $I$ is the identity operator in $Z$, and it is $\Delta -$positive in
the sense that the bilinear form
\begin{equation}
j(z,z^{\prime })=\Delta (z,Jz^{\prime })  \label{inpr}
\end{equation}%
is an inner product in $Z$. Note that $\Delta -$positivity is equivalent to
the conditions
\begin{equation}
\Delta J=-J^{t}\Delta ,\quad \Delta J\geq 0.  \label{comstr}
\end{equation}

Operator $J$ defines the structure of complex unitary space in $Z$ (of
dimensionality $s$) in which $iz=Jz$ and the inner product is
\begin{equation*}
j(z,z^{\prime })+i\Delta (z,z^{\prime })=\Delta (z,Jz^{\prime })+i\Delta
(z,z^{\prime }).
\end{equation*}

In what follows we will consider bilinear forms $\alpha ,\mu $ in the space $%
Z=\mathbb{R}^{2s}$, and the matrices of such forms will be denoted by the
same letters, e.g. $\alpha (z,z^{\prime })=z^{t}\alpha z^{\prime },$ etc.

\begin{lemma}
\label{symbas} Let $\alpha \left( z,z^{\prime }\right) =z^{t}\alpha
z^{\prime}$ be an inner product in the symplectic space $(Z,\Delta )$. Then
there is a symplectic basis $\left\{ e_{j},h_{j};j=1,\dots ,s\right\} $ in $%
Z $ such that the form $\alpha $ is diagonal with the matrix
\begin{equation}  \label{salpha}
\tilde{\alpha}=\mathrm{diag}\left[
\begin{array}{cc}
\alpha _{j} & 0 \\
0 & \alpha _{j}%
\end{array}
\right] ,
\end{equation}
where $\alpha _{j}>0.$
\end{lemma}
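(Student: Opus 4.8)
The plan is to prove a symplectic normal-form (Williamson-type) statement: I want to produce a transition matrix $T$ that is a symplectic transformation, $T^{t}\Delta T=\Delta$, and that simultaneously brings $\alpha$ to the block-diagonal form (\ref{salpha}). The columns of such a $T$ are then the desired symplectic basis $\{e_{j},h_{j}\}$, since the matrix of $\alpha$ in that basis is exactly $T^{t}\alpha T=\tilde{\alpha}$, and the symplectic condition of the Appendix, $\Delta(Tz,Tz')=\Delta(z,z')$, reads $T^{t}\Delta T=\Delta$.

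First I would reduce everything to a single skew-symmetric matrix. Because $\alpha$ is a genuine inner product, its matrix is symmetric and positive definite, so it admits a positive definite symmetric square root $\alpha^{1/2}$. I would then form $\mathcal{K}=\alpha^{-1/2}\Delta\,\alpha^{-1/2}$, which is real and skew-symmetric since $\Delta^{t}=-\Delta$. The central step is the real orthogonal canonical form of $\mathcal{K}$: there is an orthogonal $O$ with $O^{t}\mathcal{K}O=\mathrm{diag}\left[\begin{smallmatrix}0 & \nu_{j}\\ -\nu_{j} & 0\end{smallmatrix}\right]$, where $\pm i\nu_{j}$ are the purely imaginary eigenvalues of $\mathcal{K}$ and every $\nu_{j}>0$. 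This is the spectral theorem for skew-symmetric matrices, and I expect it to be the only nontrivial ingredient; the point that needs care is that no $\nu_{j}$ can vanish, which is guaranteed by the nondegeneracy of $\Delta$ (hence of $\mathcal{K}$), so that the block pattern of the canonical form matches that of $\Delta$ exactly.

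Writing $D=\mathrm{diag}[\,\nu_{j}^{1/2},\nu_{j}^{1/2}\,]$ (block-diagonal and positive), the canonical form factors as $D\Delta D$, i.e. $O^{t}\mathcal{K}O=D\Delta D$. I would then set $T=\alpha^{-1/2}OD^{-1}$ and verify the two required identities by direct substitution: $T^{t}\Delta T=D^{-1}O^{t}\mathcal{K}OD^{-1}=D^{-1}(D\Delta D)D^{-1}=\Delta$, so $T$ is symplectic, and $T^{t}\alpha T=D^{-1}O^{t}OD^{-1}=D^{-2}=\mathrm{diag}[\,\nu_{j}^{-1},\nu_{j}^{-1}\,]$, which is precisely (\ref{salpha}) with $\alpha_{j}=\nu_{j}^{-1}>0$. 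This is consistent with the spectral description used in the main text, since $\Delta^{-1}\alpha$ is then similar to $\Delta^{-1}\tilde{\alpha}$, whose blocks $\alpha_{j}\left[\begin{smallmatrix}0 & -1\\ 1 & 0\end{smallmatrix}\right]$ have eigenvalues $\pm i\alpha_{j}$.

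An alternative route, more in keeping with the polar-decomposition language used earlier in the paper and in the Appendix, would be to regard $-\alpha^{-1}\Delta$ as a skew-adjoint operator on the Euclidean space $(Z,\alpha)$, take its polar decomposition to obtain a complex structure $J$ (the orthogonal factor) commuting with the positive factor, and then diagonalize that positive factor in a $J$-adapted $\alpha$-orthonormal basis; rescaling the resulting pairs $\{f_{j},Jf_{j}\}$ produces the same symplectic basis. Either way the essential difficulty is identical and isolated in the canonical reduction of a skew-symmetric (equivalently, skew-adjoint) operator, everything else being routine verification.
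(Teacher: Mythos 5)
Your proof is correct and rests on the same key ingredient as the paper's: the canonical (block) form of a real skew-symmetric operator, applied there directly to $A=\Delta^{-1}\alpha$ viewed as a skew-adjoint operator on the Euclidean space $(Z,\alpha)$, with a normalization of the resulting orthogonal basis in place of your explicit transition matrix $T=\alpha^{-1/2}OD^{-1}$. Indeed, the ``alternative route'' you sketch at the end is essentially the paper's own proof, so the two arguments differ only in packaging.
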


Consider the operator $A=\Delta ^{-1}\alpha $ satisfying
\begin{equation*}
\alpha (z,z^{\prime })=\Delta (z,Az^{\prime }).
\end{equation*}%
The operator $A$ is skew-symmetric in the Euclidean space $(Z,\alpha
):A^{\ast }=-A.$ According to a theorem from linear algebra, there is an
orthogonal basis $\left\{ e_{j},h_{j}\right\} $ in $(Z,\alpha )$ and
positive numbers $\left\{ \alpha _{j}\right\} $ such that
\begin{equation*}
Ae_{j}=\alpha _{j}h_{j};\quad Ah_{j}=-\alpha _{j}e_{j}.
\end{equation*}%
Choosing the normalization $\alpha (e_{j},e_{j})=\alpha (h_{j},h_{j})=\alpha
_{j}$ gives the symplectic basis in $(Z,\Delta )$ with the required
properties.

For arbitrary inner product $\alpha $ in $Z$ there is at least one operator
of complex structure $J$, commuting with the operator $A=\Delta ^{-1}\alpha
, $ namely, the orthogonal operator $J$ from the polar decomposition
\begin{equation}  \label{kspd}
A=\left\vert A \right\vert J=J\left\vert A \right\vert
\end{equation}
in the Euclidean space $(Z,\alpha ).$ Applying Lemma \ref{symbas}, we obtain
that there is a symplectic basis $\left\{ e_{j},h_{j};j=1,\dots ,s\right\} $
in which the form $\alpha $ is diagonal with the matrix (\ref{salpha}) while
$J$ has the matrix
\begin{equation}
\tilde{J}=\mathrm{diag}\left[
\begin{array}{cc}
0 & -1 \\
1 & 0%
\end{array}
\right] ,  \label{jtilde}
\end{equation}
so that
\begin{equation*}
J e_j=h_j,\quad J h_j=-e_j.
\end{equation*}

With every complex structure we can associate the cyclic one-parameter group
$\left\{ \mathrm{e}^{\varphi J};\varphi \in \lbrack 0,2\pi ]\right\} $ of
symplectic transformations which we call the \textit{gauge group}. The gauge
group in $Z$ induces the unitary group of the \textit{gauge transformations}
in $\mathcal{H}$ by the formula
\begin{equation}
W(\mathrm{e}^{\varphi J}z)=\mathrm{e}^{-i\varphi G}W(z)\mathrm{e}^{i\varphi
G},  \label{gaugetr}
\end{equation}%
where $G=\sum_{j=1}^{s}\tilde{a}_{j}^{\dagger }\tilde{a}_{j}$ is the
\textit{total number operator} in $\mathcal{H}$.

An operator $X$ in $\mathcal{H}$ is called \textit{gauge invariant} if
\begin{equation*}
\mathrm{e}^{-i\varphi G}X\mathrm{e}^{i\varphi G}=X
\end{equation*}%
for all $\varphi \in \lbrack 0,2\pi ].$ By using (\ref{gaugetr}) and (\ref%
{comstr}) we find that a quadratic operator $X=R\epsilon R^{T}$, where $%
\epsilon $ is a symmetric positive matrix, is gauge invariant if
\begin{equation*}
\lbrack J,\epsilon \Delta ]=0,
\end{equation*}%
i.e. $J$ is the operator of complex structure from the polar decomposition
of the skew-symmetric operator $-\epsilon \Delta $ in the Euclidean space $%
(Z, \epsilon )$ equipped with the inner product $\epsilon
(z,z^{\prime})=-z^t\Delta\epsilon \Delta z^{\prime}$. Such a complex
structure exists for every energy matrix $\epsilon $.

Consider a channel $\Phi $ in $\mathfrak{T}(\mathcal{H})$. Assume that in
the space $Z$ some operator of complex structure $J$ is fixed and let $G$ be
the operator generating the unitary group of gauge transformations in $%
\mathcal{H}$ according to the formula (\ref{gaugetr}). The channel is called
\textit{gauge covariant}, if
\begin{equation}
\Phi \lbrack \mathrm{e}^{i\varphi G}\rho \mathrm{e}^{-i\varphi G}]=\mathrm{e}%
^{i\varphi G}\Phi \lbrack \rho ]\mathrm{e}^{-i\varphi G}  \label{calinvch}
\end{equation}%
for all input states $\rho $ and all $\varphi \in \lbrack 0,2\pi ].$ For the
Gaussian channel with parameters $(K,\mu )$ this reduces to the conditions (\ref{cov}).
Thus, a natural choice of the complex structure in $Z$ is given by any $J$,
commuting with the operator $\Delta ^{-1}\mu $. Existence of such a complex
structure for nondegenerate matrix $\mu $ follows from the proof of lemma \ref{symbas}.

\section{Acknowledgments}

The author is grateful to M. E. Shirokov for comments and discussions. The
work was supported by the grant of Russian Scientific Foundation (project No
14-21-00162).

\end{document}